\renewcommand\footnotemark{}
	\let\NAT@parse\undefined
	\def\thanks{}%
	\def\footnotemark{}%
	\def\\{}%
	\def\newline{}%
	\def\qquad{}%
\newcommand{\email}[1]{\href{mailto:#1}{#1}}
\crefname{figure}{Figure}{Figures}
\crefname{ALG@line}{step}{steps}
\newcommand{\secref}[1]{\hyperref[#1]{\S\ref*{#1}}}
\newcommand{\figref}[1]{\hyperref[#1]{Fig. \ref*{#1}}}
\definecolor{mycolor1}{rgb}{0.00000,0.44700,0.74100}%
\definecolor{mycolor2}{rgb}{0.85000,0.32500,0.09800}%
\definecolor{mycolor3}{rgb}{0.92900,0.69400,0.12500}%
\definecolor{mycolor4}{rgb}{0.49400,0.18400,0.55600}%
\definecolor{mycolor5}{rgb}{0.46600,0.67400,0.18800}%
\definecolor{mycolor6}{rgb}{0.30100,0.74500,0.93300}%
\definecolor{mycolor7}{rgb}{0.63500,0.07800,0.18400}%
\pgfplotsset{
	compat=1.16,
	myaxis/.style = {
		cycle list/GnBu-9,
		xmin = 0    , xmax = 20,
		ymin = 1e-06, ymax = 10,
		xmajorgrids,
		ymajorgrids,
		yminorticks,
		title style  = {font=\bfseries},
		legend style = {%
			rounded corners,
			inner sep = 1pt,
			outer sep = 3pt,
			at = {(0.025,0)},
			anchor = south west,
			legend cell align = left,
			align = left,
			fill opacity = 0.7,
		},
	},
	trajaxis/.style = {
		xmin = -11, xmax = 11,
		ymin = -16, ymax = 22,
		ticks = none,
		cycle list/GnBu-9,
	},
	plotline/.style = {
		line width = 1pt,
		index of colormap = #1 of GnBu-9,
	},
	plotlinemarker/.style = {
		plotline=#1,
		only marks,
		mark repeat=200,
		mark=triangle*,
		mark size = 4pt,
		forget plot,
	},
	plotlinefirstmarker/.style = {
		plotlinemarker = #1,
		mark options = {fill opacity = 0.2, line width = 0.25pt},
		mark phase=1,
	},
	plotlinelastmarker/.style = {
		plotlinemarker=#1,
		mark options = {fill opacity = 1},
		mark phase=101,
	},
}
\tikzset{%
	mynode/.style = {
		draw,
		circle,
		inner sep=0pt,
		minimum size=15pt,
	},
	arrow>/.style = {%
		->,
		shorten > = 1pt,
	},
	arrow</.style = {%
		<-,
		shorten < = 1pt,
	},
	arrow<>/.style = {%
		<->,
		shorten > = 1pt,
		shorten < = 1pt,
	},
}
\newcommand{\Cone}{{\rm C}}
\let\leq\leqslant
\let\geq\geqslant
\renewcommand\Re{\operatorname{Re}}
\title{\LARGE\bf
	Algebraic connectivity of layered path graphs under node deletion%
}
\author{\Large
	Ryusei Yoshise\thanks{%
		R. Yoshise is with Kyushu University, Faculty of Mathematics,
		744 Motooka, Nishi-ku, 819-0395 Fukuoka, Japan.
		He is supported by the Graduate Program of Mathematics for Innovation.
		\email{yoshise.ryusei.597@s.kyushu-u.ac.jp}%
	}%
	\texorpdfstring{\and}{ }%
	\Large
	Kaoru Yamamoto\thanks{%
		K. Yamamoto is with Kyushu University, Faculty of Information Science and Electrical Engineering,
		744 Motooka, Nishi-ku, 819-0395 Fukuoka, Japan.
		Her work is supported by JSPS KAKENHI Grants JP19H02161 and JP20K14766.
		\email{yamamoto@ees.kyushu-u.ac.jp}%
	}%
}
\begin{document}

	\maketitle

	\begin{abstract}
		This paper studies the relation between node deletion and algebraic connectivity for graphs with a hierarchical structure represented by layers.
To capture this structure, the concepts of layered path graph and its (sub)graph cone are introduced.
The problem is motivated by a mobile robot formation control guided by a leader.
In particular, we consider a scenario in which robots may leave the network resulting in the removal of the nodes and the associated edges.
We show that the existence of at least one neighbor in the upper layer is crucial for the algebraic connectivity not to deteriorate by node deletion.

	\end{abstract}

	\section{Introduction}
		The algebraic connectivity, i.e., the second smallest eigenvalue of the Laplacian matrix, plays an important role in networked dynamical systems.
It is an indicator of network connectivity \cite{fiedler1973algebraic} and quantifies the convergence speed of consensus algorithms \cite{olfati2004consensus,olfati2007consensus,olshevsky2009convergence}.
It is also an important measure of the robustness of the network to link or node failures \cite{godsil2001algebraic,jamakovic2007relationship}.
For these reasons, in mobile robot networks various control strategies to maximize the algebraic connectivity have been proposed; see, e.g., \cite{kim2005maximizing,degennaro2006decentralized,zavlanos2007potential,zavlanos2008distributed,zavlanos2011graph,satici2013connectivity}.

In this paper, we study a particular type of network structure that consists of multiple layers, each of which is the disjoint union of path graphs as in \cref{fig:layeredpathgraphs}.
The information flow between layers is unidirectional, while it is bidirectional within each layer.
We call such a graph ``layered path graph,'' the formal definition of which is provided in \cref{sec:layeredgraphsdef}.
The problem is motivated by a formation control of mobile robots moving in a flock with hierarchical structure guided by a leader \cite{shen2008cucker,jia2019modelling,nag2022flock}.
We assume that each robot has a communication/sensing range and receives/measures the information of its immediate neighbors in the preceding and the same layers.
In particular, we consider the situation in which a robot (represented by a node) may leave the network, which is captured by the removal of the node and the associated edges.
It is also worth noting that there is a rich and vast literature that relates the spectral properties of a path graph and the control performance in the context of vehicle platooning, such as the $H_{\infty}$ performance against external disturbance \cite{yamamoto2015bounded,pates2018sensitivity,pirani2019graph}, string stability \cite{jovanovic2005ill,herman2014nonzero,zheng2015stability}, and the network resilience and robustness \cite{pirani2017graph,pirani2022impact}, which motivates us to investigate the algebraic connectivity for layered path graphs.

\subsection*{Contribution}
	The main contribution of this work is to provide a set of formal proofs to classify the nodes according to whether their removal deteriorates the network performance with respect to the algebraic connectivity.
	To this end, the concepts of \emph{from-above degree} and \emph{subgraph cone of layers} are introduced that aggregate all the necessary information of upper layers, leading to a particularly simple analysis.

\subsection*{Paper organization}
	The paper is organized as follows.
	\cref{sec:prelim} introduces the terminology and notation used in this paper as well as a prominent result by Tutte \cite{tutte1948dissection}.
	\cref{sec:results} gives the main results about the relation between algebraic connectivity and node deletion with formal proofs.
	In \cref{sec:example} we confirm our results via numerical simulations.
	\cref{sec:conclusions} concludes the paper.

	\begin{figure}[tb]
		\hspace{5.6mm}%
		\includetikz[width=0.35\linewidth]{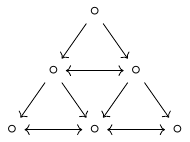}
		\hspace{11.5mm}%
		\includetikz[width=0.35\linewidth]{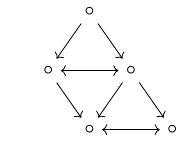}

		\vspace{0.5\baselineskip}%
		\includetikz[width=0.48\linewidth]{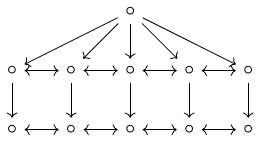}
		\hfill
		\includetikz[width=0.48\linewidth]{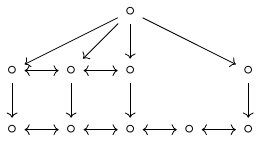}

		\vspace{0.5\baselineskip}%
		\includetikz[width=0.48\linewidth]{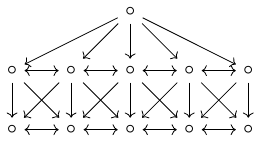}
		\hfill
		\includetikz[width=0.48\linewidth]{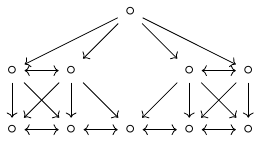}

		\vspace{0.5\baselineskip}%
		\caption{Examples of layered path graphs.}
		\label{fig:layeredpathgraphs}
	\end{figure}

	\section{Preliminaries}\label{sec:prelim}%
		The sets of real numbers and integers are denoted by $\R$ and $\mathbb{Z}$, respectively, and we let $\mathbb{R}_{\geq 0} = \{x \in\mathbb{R}\mid  x \geq 0 \}$ and $\mathbb{Z}_{\geq 0} = \{z \in\mathbb{Z}\mid  z \geq 0 \}$.
The cardinality of a set $A$ is denoted as $\# A$.

For an $n\times n$-matrix $M$, $M[i]$ denotes the $(n-1)\times(n-1)$-matrix obtained from $M$ by removing the $i$-th row and the $i$-th column.
The $(i,j)$ entry of a matrix $M$ is denoted by $[M]_{ij}$.
The eigenvalues of $M$ are denoted as $\{\lambda_i(M)\}_{1\leqslant i \leqslant n}$, where $\{\lambda_i(M)\}_{1\leqslant i \leqslant n}$ satisfy $\Re(\lambda_1(M)) \leqslant \Re(\lambda_2(M)) \leqslant \cdots \leqslant \Re(\lambda_n(M))$.

A \emph{digraph} $G$ consists of a \emph{node set} $V(G)=\{v_1,v_2,\dots,v_n\}$ and an \emph{edge set} $E(G) \subseteq V(G) \times V(G)$, where an edge $e$ is an ordered pair of distinct nodes, i.e., $(v_i, v_j) \in E(G)$ iff there exists a directed edge from $v_i$ to $v_j$ for $v_i \neq v_j$.
We will also use a shorthand notation $e_{ji}$ to denote $(v_i, v_j)$ in the sequel.
A digraph $G$ is the \emph{disjoint union of graphs} $\{G_i\}_{1\leqslant i \leqslant m}$ if $V(G)$ is the disjoint union of $\{V(G_i)\}_{1\leqslant i \leqslant m}$ and $E(G)$ is the disjoint union of $\{E(G_i)\}_{1\leqslant i \leqslant m}$.
A digraph $H$ is an \emph{induced subgraph} of $G$ if every edge of $E(G)$ with both end-nodes in $V(H)$ is in $E(H)$.
A \emph{path graph} is an undirected tree with exactly two leaf nodes.

The Laplacian matrix of a weighted graph $G^w$, denoted by $L_{G^w}$, is defined as
\[
	[L_{G^w}]_{ij} = \begin{cases}
				\sum_{k\neq i} w(e_{ik}) &\quad\text{if } i=j\\
				-w(e_{ij}) &\quad\text{if } i\neq j
				   \end{cases}
\]
where $w(e)=0$ if $e\not\in E(G^w)$. For an unweighted graph $G$ we simply denote it by $L_G$ with $w(e) = 1$ if $e\in E(G)$.

The Laplacian $L$ of a graph with symmetric weights is always a symmetric positive-semidefinite matrix, with eigenvalues $0=\lambda_1(L) \leq \lambda_2(L) \leq \cdots \leq \lambda_n(L)$.
The second smallest eigenvalue $\lambda_2(L)$, also known as the algebraic connectivity or Fiedler value \cite{fiedler1973algebraic}, is a measure of the graph connectivity and the robustness of the network to node/link failures.

The following theorem is well known in graph theory and will be used in this paper.
\begin{fact}[Tutte's matrix-tree theorem\cite{tutte1948dissection,orlin1978line}]\label{M-tree}
	Let \(G\) be a weighted digraph.
	For any node $v_i \in V(G)$ it holds that
	\[
	\det(L_{G^w}[i])=\sum \left\{ \prod_{e \in T} w(e) \biggm| \text{\begin{tabular}{@{}c@{}}$T$ is a directed spanning\\ tree of $G$ rooted at $v_i$\end{tabular}} \right\},
	\]
	where a directed spanning tree is a spanning tree such that no two directed edges share their tails.
\end{fact}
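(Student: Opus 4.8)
The plan is to fix the root $r:=v_i$, reduce everything to the $(n-1)\times(n-1)$ principal submatrix $\hat L:=L_{G^w}[i]$, and evaluate $\det\hat L$ by expanding the determinant multilinearly in its rows. The starting point is a rewriting of each row. Writing $\mathbf e_k$ for the standard basis (row) vectors indexed by $V(G)\setminus\{r\}$ and adopting the convention $\mathbf e_r:=0$ (there is no column $r$ in $\hat L$), I would first verify that the $j$-th row of $\hat L$ equals $\sum_{k\neq j} w(e_{kj})\,(\mathbf e_j-\mathbf e_k)$, the sum ranging over all $k\in V(G)\setminus\{j\}$. Indeed, the diagonal entry $\sum_{k\neq j}w(e_{kj})$ is reproduced by the $\mathbf e_j$ terms; for $k\notin\{j,r\}$ the $-\mathbf e_k$ term reproduces the off-diagonal entry $-w(e_{kj})$; and the $k=r$ summand contributes nothing since $\mathbf e_r=0$.

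Next I would invoke multilinearity of the determinant in the rows. Selecting in each row $j$ a single summand $k=f(j)$ produces a function $f\colon V(G)\setminus\{r\}\to V(G)$ with $f(j)\neq j$ (but possibly $f(j)=r$), and yields
\[
\det\hat L=\sum_f\Bigl(\prod_{j\neq r} w\bigl(e_{f(j)\,j}\bigr)\Bigr)\det M_f,
\]
where $M_f$ is the matrix whose $j$-th row is $\mathbf e_j-\mathbf e_{f(j)}$. It then remains to show that $\det M_f$ vanishes unless $f$ is the parent map of a spanning tree rooted at $r$, in which case $\det M_f=1$; the claimed identity follows at once, since $\prod_{j}w(e_{f(j)\,j})=\prod_{e\in T}w(e)$ for the associated tree $T$.

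The combinatorial core, and the step I expect to be the main obstacle, is the evaluation of $\det M_f$. For the vanishing part, I would argue that if the functional graph $j\mapsto f(j)$ contains a directed cycle $j_1\to j_2\to\cdots\to j_m\to j_1$ among nodes of $V(G)\setminus\{r\}$, then the corresponding rows satisfy $\sum_\ell(\mathbf e_{j_\ell}-\mathbf e_{f(j_\ell)})=0$ by telescoping, so $M_f$ has linearly dependent rows and $\det M_f=0$. Conversely, since every non-root node has out-degree exactly one under $f$ while $r$ is a sink, the absence of a cycle forces every forward orbit to terminate at $r$; hence $f$ is precisely the parent map of a spanning out-arborescence rooted at $r$ (with edges directed away from $r$, as dictated by the in-degree convention in the definition of $L_{G^w}$). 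For such an $f$, I would order the non-root nodes by a topological order descending from $r$, so that $f(j)$ always precedes $j$; applying this permutation simultaneously to rows and columns (which leaves the determinant unchanged) renders $M_f$ lower-triangular with every diagonal entry equal to $+1$, whence $\det M_f=1$. Summing over the surviving maps $f$ then gives exactly $\sum_T\prod_{e\in T}w(e)$. A cleaner but less self-contained alternative would route through the Cauchy--Binet formula applied to a directed incidence matrix, matching nonsingular maximal minors with spanning trees; I would nonetheless favour the multilinearity argument as the primary route, since it avoids introducing incidence matrices and makes the role of each edge weight transparent.
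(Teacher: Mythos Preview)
The paper does not prove this statement: it is quoted as a classical fact with references to Tutte and Orlin, and is only \emph{used} (in the proof of \cref{alpha}). So there is no ``paper's own proof'' to compare against.

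That said, your argument is sound and is essentially the standard multilinear/functional-graph proof of the directed matrix-tree theorem. The row identity for $\hat L$ is correct given the paper's in-degree convention $[L]_{jj}=\sum_{k\neq j}w(e_{kj})$, $[L]_{jl}=-w(e_{lj})$; your convention $\mathbf e_r=0$ cleanly absorbs the deleted column. The evaluation of $\det M_f$ is also correct: a cycle in the functional graph $j\mapsto f(j)$ among non-root vertices gives a telescoping linear dependence, while in the acyclic case every orbit must terminate at the unique sink $r$, so $f$ is the parent map of an arborescence with edges $f(j)\to j$ in $G$ (out-arborescence from $r$), matching the paper's usage in \cref{alpha} where the spanning tree has edges $(v_\ast,v)$ directed away from the root. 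The triangularization step after reordering by distance from $r$ is fine since simultaneous row/column permutation preserves the determinant. Your alternative via Cauchy--Binet and the incidence matrix is the other textbook route and would work equally well; the multilinearity argument you chose is arguably more transparent about why exactly the arborescences survive.
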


	\section{Results}\label{sec:results}%
		In this section we provide a set of lemmas and theorems about the algebraic connectivity.
We start from establishing properties for the case where a graph is the disjoint union of any graphs.
Then we move to more specific cases where a graph is equipped with a hierarchical structure represented by \emph{layers}.
The main result is \cref{thm:main} that provides the relation between node deletion and the algebraic connectivity of \emph{layered path graphs}.
To proceed, we first introduce the proposed notions of \emph{layered graphs} and \emph{layered path graphs}.

\subsection{Layered graphs}\label{sec:layeredgraphsdef}
We now equip a graph with a hierarchical structure represented by \emph{layers}.

\begin{defin}[Layered graph]%
A digraph $G$ is a \emph{layered graph} with \emph{layers} $(G_i)_{1\leqslant i \leqslant m}$ if $V(G)$ is the disjoint union of $\{V(G_i)\}_{1\leqslant i \leqslant m}$, and for any edge $e \in E(G)$, there exists $i$ such that either $e \in E(G_i)$ or $e \in V(G_i)\times V(G_{i+1})$.
\end{defin}
\begin{defin}[Layered path graph]%
A \emph{layered path graph} is a layered graph with layers $(G_i)_{1\leqslant i \leqslant m}$ such that
\begin{enumerate}
	\item $V(G_1)$ is a singleton, and
	\item each $G_i$ is the disjoint union of path graphs.
\end{enumerate}
\end{defin}

For a node $v$ in a layer $G_i$, we define
\begin{align}
N^{\downarrow}(v) &\coloneqq \{v' \in V(G_{i-1})\mid (v',v)\in E(G)\},
\shortintertext{%
	namely the set of neighbors from the upper layer ($N^{\downarrow}(v) = \emptyset$ for $v \in V(G_1)$), and
}
	d^{\downarrow}(v) &\coloneqq \#N^{\downarrow}(v)
\end{align}
as the \emph{from-above degree}, namely the number of nodes in $N^{\downarrow}(v)$.

We also introduce the notion of the \emph{cone} of a digraph that plays an important role in the subsequent discussions.
Here, we refine and adapt the standard notion of graph cone in algebraic graph theory to our layered structure, thereby introducing integer weights carrying connectivity information.

\begin{defin}[Cone of a digraph]
	For a given map $f: V(G) \to \mathbb{R}_{\geq 0}$, the \emph{cone} of a digraph $G$ is the weighted graph $\Cone G^f$ obtained from $G$ by adding an extra node $v_\ast$, where $V(\Cone G^f) = V(G) \cup \{v_\ast\},\: E(\Cone G^f) = E(G) \cup \{(v_\ast,v) \mid v\in V(G)\},$ and the weight $w: E(\Cone G^f) \to \mathbb{R}_{\geq 0}$ of $\Cone G^f$ is defined by $w(e)=1$ for any $e \in E(G)$, and $w((v_\ast, v)) = f(v)$ for any $v \in V(G)$.
\end{defin}

As shown in \cref{fig:cone}, cones provide convenient representations of specific layers within a layered graph, in which the upper layer is condensed into a single node.
We will later show that it is the from-above degree $d^\downarrow (v)$ that determines whether the removal of a node results in the deterioration in algebraic connectivity, and the cone representation efficiently extracts this information.

\begin{figure}[htb]
	\centering
	\includetikz[width=\linewidth]{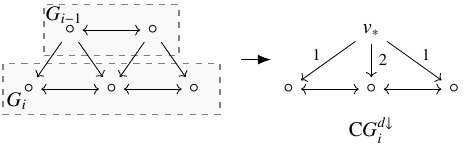}
	\caption{The cone of the \(i\)-th layer $\Cone G_i^{d\downarrow}$ is obtained by condensing the upper layer(s) into a single node and weighing each incoming edge by the from-above degree.}
	\label{fig:cone}
\end{figure}

\subsection{Algebraic connectivity of disjoint union of graphs}
	We first provide the following proposition about the algebraic connectivity of the cone of a digraph which is the disjoint union of any graphs.
	\begin{prop}\label{dis}
		Suppose that a digraph $G$ is the disjoint union of $\{G_i\}_{1 \leqslant i \leqslant m}$. For a given map $f: V(G) \to \R_{\geqslant 0}$,
		\[
			\Re(\lambda_2(L_{\Cone G^f})) = \min \{\Re(\lambda_2(L_{\Cone G_i^{f}})) \mid 1 \leqslant i \leqslant m\}.
		\]
	\end{prop}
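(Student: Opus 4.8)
The plan is to reduce the statement to a clean spectral decomposition of the cone Laplacian. First I would compute $L_{\Cone G^f}$ explicitly in the node ordering $(v_\ast, V(G))$. Since $v_\ast$ has no incoming edges, the row indexed by $v_\ast$ vanishes identically, while each node $v \in V(G)$ contributes $f(v)$ to its own diagonal and $-f(v)$ in the $v_\ast$-column. This yields the block lower-triangular form
\[
	L_{\Cone G^f} = \begin{pmatrix} 0 & \mathbf 0^\top \\ -\mathbf f & L_G + D_f \end{pmatrix},
\]
where $\mathbf f = (f(v))_{v\in V(G)}$ and $D_f = \operatorname{diag}(f(v))_{v\in V(G)}$. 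Because the block above the diagonal is zero, the spectrum (with multiplicity) satisfies $\operatorname{spec}(L_{\Cone G^f}) = \{0\} \sqcup \operatorname{spec}(L_G + D_f)$.

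Next I would exploit the disjoint-union hypothesis. Ordering the nodes of $G$ one piece at a time makes $L_G = \operatorname{diag}(L_{G_1},\dots,L_{G_m})$ and $D_f = \operatorname{diag}(D_{f,1},\dots,D_{f,m})$ block diagonal, so $L_G + D_f = \operatorname{diag}(M_1,\dots,M_m)$ with $M_i := L_{G_i} + D_{f,i}$ and $D_{f,i}$ the restriction of $f$ to $V(G_i)$. Hence $\operatorname{spec}(L_{\Cone G^f}) = \{0\} \sqcup \bigsqcup_{i} \operatorname{spec}(M_i)$. Applying the first paragraph to each piece separately gives $\operatorname{spec}(L_{\Cone G_i^f}) = \{0\} \sqcup \operatorname{spec}(M_i)$. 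The identity to be proved therefore becomes a purely numerical comparison of the second-smallest real parts of these multisets.

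The remaining step, and the one requiring the most care, is the bookkeeping of which eigenvalue plays the role of $\lambda_2$. I would first localize $\operatorname{spec}(M_i)$ via Gershgorin's theorem: in the row of a node $v$ the diagonal entry is the weighted in-degree plus $f(v)$, while the off-diagonal entries sum in modulus to exactly the in-degree, so that disc lies in $\{z : \Re(z) \ge f(v)\} \subseteq \{z : \Re(z)\ge 0\}$ since $f(v) \ge 0$. Thus every eigenvalue of every $M_i$ has nonnegative real part. Consequently the appended $0$ attains the minimal real part in each multiset, so $\Re(\lambda_1) = 0$ and, after deleting one minimizing entry, $\Re(\lambda_2(L_{\Cone G_i^f})) = \min\{\Re(\mu) : \mu \in \operatorname{spec}(M_i)\}$ --- this holds whether or not $M_i$ itself has an eigenvalue on the imaginary axis. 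The same argument applied to the full union gives $\Re(\lambda_2(L_{\Cone G^f})) = \min_{\mu \in \bigsqcup_i \operatorname{spec}(M_i)} \Re(\mu) = \min_i \Re(\lambda_2(L_{\Cone G_i^f}))$, which is the claim. The subtlety to state carefully is precisely this second-smallest extraction: because the eigenvalues are complex and only ordered by real part, I must reason at the level of multisets rather than invoking any interlacing or variational (min--max) characterization, which need not apply to the nonsymmetric Laplacians arising here.
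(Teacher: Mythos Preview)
Your argument is correct and follows essentially the same block lower-triangular decomposition as the paper: both observe that deleting the row and column of $v_\ast$ from $L_{\Cone G^f}$ leaves the block-diagonal matrix $\operatorname{diag}(M_1,\dots,M_m)$ with $M_i = L_{G_i}+D_{f,i} = L_{\Cone G_i^f}[1]$, so the characteristic polynomial factors accordingly. The only substantive difference is that you supply, via Gershgorin, an explicit reason why the appended eigenvalue $0$ realizes the minimal real part (hence why $\Re(\lambda_2)$ is the minimum over $\operatorname{spec}(M_i)$); the paper simply asserts $\lambda_1(L_{\Cone G_i^f})=0$ and leaves this standard fact implicit.
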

	\begin{proof}
		Let $L_i^x := (L_{\Cone G_i^{f}}-xI)[1]$ for $x\in \mathbb{C}$.
		Since $\lambda_1(L_{\Cone G_i^f})=0$, $\lambda_2(L_{\Cone G_i^f})$ is a root of $\det(L_i^x)$.
		The claim now follows by observing that $L_{\Cone G^f}-xI$ can be written as
		\begin{qedequation*}
		\setlength{\arraycolsep}{4pt}%
			L_{\Cone G^f}-xI=
			\begin{bmatrix}
			-x\\[2pt]
			\ast&  L_2^x \\
			\vdots&\ddots&\ddots \\
			\ast&\cdots&\ast& L_{m}^x
			\end{bmatrix}.
		\end{qedequation*}
	\end{proof}

\subsection{Algebraic connectivity of layered graphs}
	We now discuss layered graphs.
	Similarly to \cref{dis}, the algebraic connectivity of a layered graph $G$ can be obtained by considering the cone of each layer for a map $d^{\downarrow}$.
	\begin{thm}\label{layer}
		For a digraph $G$ with layers $(G_i)_{1\leqslant i \leqslant m}$,
		\[
		\Re(\lambda_2(L_G)) = \min \{\Re(\lambda_2(L_{G_1})),  \Re(\lambda_2(L_{\Cone G_i^{d^{\downarrow}}})) \mid 2\leqslant i \leqslant m \}.
		\]
	\end{thm}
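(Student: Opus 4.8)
The plan is to read the spectrum of $L_G$ off a block-triangular form induced by the layer ordering, and then match the resulting diagonal blocks to the cone Laplacians on the right-hand side. First I would order the nodes of $G$ layer by layer, $V(G_1)$ first, then $V(G_2)$, and so on. Since every edge lies either inside a layer or in $V(G_i)\times V(G_{i+1})$, and since $[L_G]_{ij}=-w(e_{ji})$ is supported on edges running \emph{from} $v_j$ \emph{into} $v_i$, the block in row-layer $a$ and column-layer $b$ vanishes unless $a=b$ or $a=b+1$. Hence $L_G$ is block lower (bi)diagonal, and its spectrum is the disjoint union of the spectra of the diagonal blocks $D_1,\dots,D_m$. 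Inspecting the diagonal entries shows $D_1=L_{G_1}$ (as $N^{\downarrow}(v)=\emptyset$ on the top layer) and, for $k\geq 2$, $D_k=L_{G_k}+\mathrm{diag}(d^{\downarrow})$, the extra diagonal mass being exactly the from-above degrees contributed by incoming inter-layer edges.

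Next I would identify $D_k$ with a submatrix of the cone Laplacian. Writing $\Cone G_k^{d^{\downarrow}}$ with the apex $v_\ast$ first, one gets $L_{\Cone G_k^{d^{\downarrow}}}=\left[\begin{smallmatrix} 0 & 0 \\ -\mathbf{d}^{\downarrow} & D_k\end{smallmatrix}\right]$, where $\mathbf{d}^{\downarrow}$ collects the from-above degrees; thus $L_{\Cone G_k^{d^{\downarrow}}}[1]=D_k$, and by the same block-triangularity argument $\mathrm{spec}(L_{\Cone G_k^{d^{\downarrow}}})=\{0\}\sqcup\mathrm{spec}(D_k)$. Two standard facts then pin down the ordering by real part: every Laplacian here is diagonally dominant with nonnegative diagonal, so by Gershgorin all eigenvalues satisfy $\Re\geq 0$; and each Laplacian annihilates $\mathbf{1}$, so $0$ is always an eigenvalue. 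Consequently $\Re(\lambda_2(L_{\Cone G_k^{d^{\downarrow}}}))$ equals the smallest real part occurring in $\mathrm{spec}(D_k)$ for $k\geq 2$, while $\Re(\lambda_2(L_{G_1}))$ is the second smallest real part in $\mathrm{spec}(D_1)$.

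Finally I would assemble the answer combinatorially. Since $\mathrm{spec}(L_G)=\bigsqcup_k\mathrm{spec}(D_k)$ has global minimum real part $0$, contributed by the zero of $D_1=L_{G_1}$, the second smallest real part of $L_G$ is obtained by deleting one copy of that zero and taking the minimum of what remains; this minimum is $\min\{\Re(\lambda_2(L_{G_1})),\ \Re(\lambda_2(L_{\Cone G_k^{d^{\downarrow}}})) : 2\leq k\leq m\}$, which is precisely the claimed right-hand side. The step I expect to require the most care is the bookkeeping in this last paragraph: because eigenvalues are ordered by real part and zeros may occur in several blocks at once (for instance when some $d^{\downarrow}$ vanishes, or when a layer is disconnected), I must argue that ``remove one copy of the global minimum'' is well defined and yields the stated minimum irrespective of these degeneracies, rather than tacitly assuming that $0$ is a simple eigenvalue of $L_G$.
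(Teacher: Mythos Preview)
Your proposal is correct and follows essentially the same route as the paper: order the vertices layer by layer, observe that $L_G$ becomes block lower triangular with diagonal blocks $L_{G_1}$ and $L_{\Cone G_k^{d^{\downarrow}}}[1]$ for $k\geq2$, and read off the spectrum from the factored characteristic polynomial. The paper's argument stops at $\det(L_G-xI)=\det(\ell_1^x)\prod_{k\geq2}\det(L_k^x)$ and simply asserts ``the claim follows''; you go further by invoking Gershgorin to secure $\Re\geq0$ and by explicitly handling the possibility that $0$ occurs with multiplicity, which is exactly the bookkeeping the paper leaves implicit.
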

	\begin{proof}
		\setlength{\arraycolsep}{3pt}%
		Let $L_i^x := (L_{\Cone G_i^{d^{\downarrow}}}-xI)[1]$ for $x\in \mathbb{C}$.
		Then,
		\[
			L_{\Cone G_i^{d^{\downarrow}}}-xI=
			\left[
			\begin{array}{c|c}
			-x&\\[1pt]
			\hline
			\ast&L_i^x\vphantom{\Big|}
			\end{array}
			\right].
		\]
		Therefore, $\{\lambda_i({L_{\Cone G_i^{d^{\downarrow}}}})\}_{2\leqslant i \leqslant n}$ are the solutions of $\det(L_{\Cone G_i^{d^{\downarrow}}}[1]-xI) = 0$.
		Denoting \(\ell_1^x=L_{G_1}-xI\),
		\[\setlength{\arraycolsep}{4pt}%
			L_G-xI =
			\begin{bmatrix}
			\ell_1^x\\[2pt]
			\ast&  L_2^x \\
			\vdots&\ddots&\ddots \\
			\ast&\cdots&\ast& L_{m}^x
			\end{bmatrix}.
		\]
		Hence,
		\[
			\det(L_G-xI) = \det(\ell_1^x)\det(L_2^x) \det(L_3^x) \cdots \det(L_m^x)
		\]
		and the claim follows.
	\end{proof}

	\begin{rem}
		We note that if $V(G_1)$ is a singleton, the theorem should be read as
		\begin{qedequation*}
			\Re(\lambda_2(L_G)) = \min \{\Re(\lambda_2(L_{\Cone G_i^{d^{\downarrow}}})) \mid 2\leqslant i \leqslant m \}.
		\end{qedequation*}
	\end{rem}

	Based on \cref{layer} and the classical result, \cref{M-tree}, the following theorem provides the relation between the from-above degree and the lower bound of the algebraic connectivity.

	\begin{thm}\label{alpha}
		Let $G$ be a layered graph with layers $(G_i)_{1\leqslant i \leqslant m}$ where any $G_i$ is an undirected graph.
		If either $G_1$ is a singleton or there exists $\alpha \in \mathbb{R}_{\geqslant 0}$ such that $\lambda_2(L_{G_1}) \geqslant \alpha$, then,
		\[
			d^{\downarrow}(v)\geqslant \alpha~~
			\forall v \in V(G) \setminus V(G_1)
		{}\implies{}
			\lambda_2(L_G) \geqslant \alpha.
		\]
	\end{thm}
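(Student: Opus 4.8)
The plan is to turn the global statement into a per-layer statement via \cref{layer}, and then to bound each layer-cone eigenvalue by an elementary positive-semidefinite comparison. Since every diagonal block produced in the proof of \cref{layer} is a symmetric matrix (see below), all eigenvalues in sight are real, and \cref{layer} together with the remark following it (for the singleton case) gives
\[
\lambda_2(L_G)=\min\{\lambda_2(L_{G_1}),\ \lambda_2(L_{\Cone G_i^{d^{\downarrow}}}) \mid 2\leqslant i\leqslant m\},
\]
where the $\lambda_2(L_{G_1})$ term is omitted when $G_1$ is a singleton. The first term is at least $\alpha$ by hypothesis (and is absent altogether when $G_1$ is a singleton), so it remains to show $\lambda_2(L_{\Cone G_i^{d^{\downarrow}}})\geqslant\alpha$ for every $i\geqslant 2$; the theorem then follows by taking the minimum.

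First I would write the cone Laplacian in block form. Listing the apex $v_\ast$ first and using that $v_\ast$ has no incoming edge while every intra-layer edge is undirected,
\[
L_{\Cone G_i^{d^{\downarrow}}}=
\begin{bmatrix}
0 & \mathbf{0}^{\top}\\
-\mathbf{d} & L_{G_i}+D
\end{bmatrix},
\qquad D=\operatorname{diag}\bigl(d^{\downarrow}(v)\bigr)_{v\in V(G_i)},
\]
with $\mathbf{d}$ collecting the same from-above degrees. This lower-right block is precisely $L_{\Cone G_i^{d^{\downarrow}}}[1]$, whose spectrum the proof of \cref{layer} identifies with $\{\lambda_j(L_{\Cone G_i^{d^{\downarrow}}})\}_{2\leqslant j\leqslant n}$. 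Because $G_i$ is undirected, $L_{G_i}$ is symmetric and positive semidefinite and $L_{G_i}+D$ is symmetric, so these eigenvalues are real.

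The core step is then a one-line estimate. The hypothesis $d^{\downarrow}(v)\geqslant\alpha$ for all $v\in V(G)\setminus V(G_1)$ says exactly that $D\succeq\alpha I$ on each layer $i\geqslant 2$; combined with $L_{G_i}\succeq 0$ this yields $L_{G_i}+D\succeq\alpha I$, i.e. $\mathbf{x}^{\top}(L_{G_i}+D)\mathbf{x}\geqslant\alpha\|\mathbf{x}\|^{2}$ for all $\mathbf{x}$. Hence every eigenvalue of $L_{G_i}+D$ is at least $\alpha$; since these are exactly $\lambda_2(L_{\Cone G_i^{d^{\downarrow}}}),\dots,\lambda_n(L_{\Cone G_i^{d^{\downarrow}}})$, we obtain $\lambda_2(L_{\Cone G_i^{d^{\downarrow}}})=\lambda_1(L_{G_i}+D)\geqslant\alpha$, as required.

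I do not expect a deep obstacle here; the difficulty is purely bookkeeping. The two points to get right are (i) that the apex $v_\ast$ contributes an isolated zero eigenvalue and that deleting it leaves exactly $L_{G_i}+D$, and (ii) that $\lambda_2$ of the cone really lands on $\lambda_1(L_{G_i}+D)$ rather than on the structural zero --- which is automatic once $L_{G_i}+D\succeq\alpha I$ with $\alpha>0$, the case $\alpha=0$ being trivial since all eigenvalues of $L_G$ are nonnegative. It is worth remarking that this route relies only on the undirectedness of each layer (so that $L_{G_i}$ is symmetric PSD) and a Weyl-type ordering, and does not actually seem to need Tutte's theorem (\cref{M-tree}).
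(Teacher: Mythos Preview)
Your argument is correct. Reducing via \cref{layer} to the per-layer statement $\lambda_2(L_{\Cone G_i^{d^{\downarrow}}})\geqslant\alpha$, identifying $L_{\Cone G_i^{d^{\downarrow}}}[1]=L_{G_i}+D$ from the block-triangular structure, and then invoking $L_{G_i}\succeq 0$ and $D\succeq\alpha I$ to conclude $L_{G_i}+D\succeq\alpha I$ is a clean and complete proof.

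The paper takes a genuinely different route at the key step. Instead of the positive-semidefinite comparison, it interprets $(L_{\Cone G_i^{d^{\downarrow}}}-xI)[1]$ as the principal minor $L_{\Cone G_i^{f}}[1]$ of a cone Laplacian with shifted edge weights $w_x((v_\ast,v))=d^{\downarrow}(v)-x$, and then applies Tutte's matrix-tree theorem (\cref{M-tree}) to write its determinant as a sum of spanning-tree weight products. Since all weights are strictly positive for $x<\alpha$ and the star rooted at $v_\ast$ is always a spanning tree, the determinant is positive on $(-\infty,\alpha)$, forcing the smallest root to be at least $\alpha$. Your approach is more elementary and transparent: it needs nothing beyond Weyl's inequality and the symmetry of $L_{G_i}$, and it explains directly \emph{why} the bound holds (each from-above edge contributes $d^{\downarrow}(v)$ to the diagonal). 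The paper's route is more combinatorial and showcases \cref{M-tree}; it has the formal merit of bounding the characteristic polynomial itself and would in principle survive for directed layers as far as real roots go, though bounding real parts of complex eigenvalues would then require a separate argument --- so in the stated undirected setting your proof is both shorter and just as strong.
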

	\begin{proof}
		If $\alpha = 0$, the statement is trivial.
		We consider the case of $\alpha > 0$.
		This implies that $\Cone G_i^{d^{\downarrow}}$ contains edges from the extra node $v_\ast$ to all nodes in $G_i$, and hence has a spanning tree with the root $v_\ast$ and the edges $(v_\ast, v)$ for any $v\in V(G_i)$.
		We now consider the weighted graph $\Cone G_i^{f}$ where $V(\Cone G_i^{f}) = V(\Cone G_i^{d^{\downarrow}}),\: E(\Cone G_i^{f}) = E(\Cone G_i^{d^{\downarrow}})$, and the weight $w_x: E(\Cone G_i^f) \rightarrow \R_{\geq 0}$ defined for $x < \alpha$ by $w_x((v_\ast, v))=d^{\downarrow}(v)-x>0\:\forall v\in V(G_i)$ and $w_x(e)=1\forall e \in E(G_i)$.
		Then $\Cone G_i^{f}$ also contains a spanning tree with the root $v_\ast$ and the edges $(v_\ast, v)$ for any $v\in V(G_i)$, and we denote it as $T_1$.
		Since $\det(L_{\Cone G_i^{d^{\downarrow}}}-xI)[1] = \det(L_{\Cone G_i^{f}})[1]$, from \cref{M-tree}, we have $\det(L_{\Cone G_i^{d^{\downarrow}}}-xI)[1] = \sum\prod_{e \in T} w_x(e) \geqslant \prod_{e \in T_1} w_x(e) > 0$ for any $x < \alpha$.
		Since the minimum root of $\det(L_{\Cone G_i^{d^{\downarrow}}}-xI)[1]$ equals $\lambda_2(L_{\Cone G_i^{d^{\downarrow}}})$, we find $\lambda_2(L_{\Cone G_i^{d^{\downarrow}}})\geqslant \alpha$.
		By \cref{layer}, $\lambda_2(L_G) \geqslant \alpha$.
	\end{proof}

\subsection{Algebraic connectivity of layered path graphs}
	When each layer is the disjoint union of path graphs, stronger results can be drawn with respect to algebraic connectivity.
	In order to show our main result, \cref{thm:main}, we first provide a set of lemmas and propositions.
	\begin{lem}\label{tridia}
		Let $S=\{k, k+1,  \ldots ,l\} \subsetneq \{1,2,\ldots, n\}$ and $A=T_n(a_1,a_2,\dots,a_n)$ be an $n \times n$ real tridiagonal matrix in the following form:
		\begin{equation}\label{eq:trid}
		\begin{bmatrix}
			a_1&-1\\\
			-1 & \ddots &\ddots\\
			&\ddots&\ddots &-1 \\
			&& -1 &a_n
		\end{bmatrix}.
		\end{equation}
		Then,
		\[\lambda_1(A) <\lambda_1(A_{S})\]
		where $A_{S} := ([A]_{ij})_{\{k \leqslant i, j \leqslant l\}}$.
	\end{lem}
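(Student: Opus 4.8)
The plan is to read this as a strict form of Cauchy interlacing specialized to the smallest eigenvalue, exploiting two structural facts: $A$ is symmetric (it is real tridiagonal with symmetric off-diagonal entries $-1$), and $A_S$ is an \emph{irreducible} principal submatrix of $A$. First I would establish the non-strict bound $\lambda_1(A) \leqslant \lambda_1(A_S)$ via the variational characterization $\lambda_1(A) = \min_{x \neq 0} (x^{\top} A x)/(x^{\top} x)$. Let $u \in \R^{\,l-k+1}$ be a unit eigenvector of $A_S$ for $\lambda_1(A_S)$, and let $\tilde{u} \in \R^{n}$ be its zero-extension, i.e.\ $\tilde{u}_i = u_i$ for $i \in S$ and $\tilde{u}_i = 0$ otherwise. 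Since the rows and columns of $A$ outside the block $S$ do not interact with the support of $\tilde{u}$, one computes $\tilde{u}^{\top} A \tilde{u} = u^{\top} A_S u = \lambda_1(A_S)$ while $\tilde{u}^{\top}\tilde{u} = 1$, which gives $\lambda_1(A) \leqslant \lambda_1(A_S)$.

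To upgrade this to a strict inequality I would argue by contradiction: suppose $\lambda_1(A) = \lambda_1(A_S)$. Then $\tilde{u}$ attains the minimum of the Rayleigh quotient of $A$, so it is an eigenvector of $A$ for $\lambda_1(A)$, that is $A\tilde{u} = \lambda_1(A)\,\tilde{u}$. Because $S \subsetneq \{1,\dots,n\}$, at least one index adjacent to the block lies outside $S$; assume $l < n$ (the case $k > 1$ is symmetric). Reading off the $(l+1)$-th equation of $A\tilde{u} = \lambda_1(A)\,\tilde{u}$, and using that $\tilde{u}$ vanishes at every index outside $S$ together with $[A]_{l+1,l} = -1$, yields $-u_l = \lambda_1(A)\cdot 0 = 0$, hence $u_l = 0$. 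In other words, the endpoint component of $u$ corresponding to the last row of $A_S$ would have to vanish.

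The hard part, and the crux of the whole argument, is to rule this out, since it is exactly what separates the strict inequality from the classical non-strict interlacing bound. Here I would invoke the non-vanishing of boundary components of eigenvectors of irreducible tridiagonal matrices: as every off-diagonal entry of $A_S$ equals $-1 \neq 0$, if the last component of $u$ were zero then the last row of $A_S u = \lambda_1(A_S) u$ would force the second-to-last component to be zero as well, and propagating this recurrence upward would make $u$ identically zero, contradicting $u \neq 0$. Thus $u_l \neq 0$, which contradicts $u_l = 0$ derived above, and therefore $\lambda_1(A) < \lambda_1(A_S)$. I expect the symmetric tridiagonal structure with nonzero off-diagonals to be precisely what makes this final step available.
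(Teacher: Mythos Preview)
Your argument is correct. The non-strict bound via the Rayleigh quotient is standard (one small remark: the phrase ``rows and columns outside $S$ do not interact with the support of $\tilde u$'' is slightly loose, since $A$ does have the coupling entries $[A]_{l,l+1}=[A]_{l+1,l}=-1$; what matters is that in the quadratic form these are multiplied by $\tilde u_l\tilde u_{l+1}=0$). The strictness step is where the content lies, and your contradiction---forcing $u_l=0$ from row $l+1$ of $A\tilde u=\lambda_1(A)\tilde u$, then propagating zeros through the eigenvector recurrence of the irreducible tridiagonal $A_S$---is exactly right. The case $|S|=1$ is trivially covered since the single component of $u$ is nonzero.

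This is a genuinely different route from the paper. The paper proceeds by induction on $n$ and works entirely with characteristic polynomials: it Laplace-expands $\det(A-xI)$ along the last row to obtain $\det(A-xI)=(a_m-x)\det(A_{S_0}-xI)-\det(A_{S_0'}-xI)$ with $S_0=\{1,\dots,m-1\}$ and $S_0'=\{1,\dots,m-2\}$, uses the inductive hypothesis to get $\lambda_1(A_{S_0})<\lambda_1(A_{S_0'})$, and then a sign change of $\det(A-xI)$ between $x\ll 0$ and $x=\lambda_1(A_{S_0})$ forces $\lambda_1(A)<\lambda_1(A_{S_0})$; the general $S$ follows by one more application of the hypothesis. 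Your approach is more conceptual and immediately generalizes to any symmetric $A$ whose restriction $A_S$ is irreducible (arbitrary nonzero off-diagonals), while the paper's determinant-and-sign argument is fully elementary and self-contained, needing no variational characterization or eigenvector structure. Both are clean; yours makes the role of irreducibility explicit, theirs keeps everything at the level of polynomial recurrences.
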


	\begin{proof}
		For $n=2$, we have
		\[
			A= \begin{bmatrix}
			\alpha_1&-1\\
			-1&\alpha_2\\
			\end{bmatrix}
		\]
		and $\lambda_1(A) < \alpha_1, \alpha_2$.

		In the general case, assume that the statement holds for any $n \leqslant m-1$.
		Let $A:=T_m(a_1,a_2,\dots,a_m)$. The Laplace expansion of $\det (A-xI)$ gives
		\[
		\det (A-xI)=(\alpha_{m}-x)\det(A_{S_0}-xI) - \det(A_{S_0'}-xI)
		\]
		where $S_0:=\{1,2,\ldots, m-1\}$ and $S_0':=\{1,2,\ldots,m-2\}$.
		From the assumption,  $\alpha:= \lambda_1(A_{S_0}) < \lambda_1(A_{S_0'})$.
		Therefore, we have $\det (A-\alpha I)=-\det(A_{S_0'}-\alpha I)<0$.
		On the other hand, for some $x \ll 0$, we find $\det (A-x I)>0$.
		Hence, $\lambda_1(A)<\alpha = \lambda_1(A_{S_0})$.
		Similarly, for $S_1=\{2,3,\ldots,m\}$, we find $\lambda_1(A) < \lambda_1(A_{S_1})$.
		If $S$ consists of $m-1$ or less elements, we find $\lambda_1(A)< \lambda_{1}(A_{S_i}) \leqslant \lambda_1(A_S)$ $(i=0,1)$ by the assumption.
	\end{proof}

	\cref{tridia} can be used to show the following lemma:
	\begin{lem}\label{eipath}
		Let $P$ be the path graph $v_1\leftrightarrow v_2 \leftrightarrow \dots \leftrightarrow v_n$, and let $f: V(P) \to \mathbb{Z}_{\geqslant 0}$ be a map satisfying either of the following conditions:
		\begin{enumerate}
			\item $f(v_1)=0$ or $f(v_n)=0$,
			\item $(f(v_1), f(v_2))=(1,0)$ or $(f(v_{n-1}),f(v_n))=(0,1)$,
			\item $\exists k$ such that $(f(v_k), f(v_{k+1}))=(0,0)$,
			\item $\exists k$ such that $(f(v_k), f(v_{k+1}), f(v_{k+2}))=(1,0,1)$.
		\end{enumerate}
		Then,
		\[\lambda_2(L_{\Cone P^f}) < 1.\]
	\end{lem}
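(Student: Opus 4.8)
The plan is to reduce everything to the tridiagonal eigenvalue comparison in \cref{tridia}. First I would compute the Laplacian of the cone. Ordering the nodes as $v_\ast, v_1, \dots, v_n$, the row of $v_\ast$ vanishes (no edge enters $v_\ast$), so $L_{\Cone P^f}$ is block lower triangular with a $1\times 1$ block $0$ and lower block
\[
M := L_{\Cone P^f}[1] = T_n(a_1,\dots,a_n),\qquad
a_i = \begin{cases} 1 + f(v_i), & i \in \{1, n\},\\ 2 + f(v_i), & 1 < i < n, \end{cases}
\]
which is exactly the tridiagonal form \eqref{eq:trid}. Since $M = L_P + \operatorname{diag}(f(v_1),\dots,f(v_n)) \succeq 0$ and $\det(L_{\Cone P^f} - xI) = -x\det(M - xI)$, the spectrum of $L_{\Cone P^f}$ is $\{0\}$ together with that of $M$; hence $\lambda_2(L_{\Cone P^f}) = \lambda_1(M)$, the same observation used in the proofs of \cref{layer} and \cref{alpha}. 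The goal therefore becomes $\lambda_1(M) < 1$.

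The key mechanism is that for any contiguous proper index set $S \subsetneq \{1,\dots,n\}$ the principal submatrix $M_S$ is again tridiagonal of the form \eqref{eq:trid}, so \cref{tridia} yields $\lambda_1(M) < \lambda_1(M_S)$. It thus suffices, in each case, to exhibit a short contiguous block $S$ whose local diagonal pattern forces $\lambda_1(M_S) = 1$. I would verify the needed local spectra directly: $[1]$ has smallest eigenvalue $1$; $\left[\begin{smallmatrix} 2 & -1 \\ -1 & 2 \end{smallmatrix}\right]$ has eigenvalues $1,3$; and $\left[\begin{smallmatrix} 3 & -1 & 0 \\ -1 & 2 & -1 \\ 0 & -1 & 3 \end{smallmatrix}\right]$ has characteristic polynomial $(3-x)(x-1)(x-4)$, hence smallest eigenvalue $1$.

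Then I would match each hypothesis to one of these blocks. Condition (1), say $f(v_1)=0$, gives $a_1 = 1$, so $S = \{1\}$ and $M_S = [1]$. Condition (2), say $(f(v_1),f(v_2)) = (1,0)$ with $v_2$ interior (if $n=2$ this is already condition (1)), gives $a_1 = a_2 = 2$, so $S = \{1,2\}$. For condition (3), if the zero--zero pair sits at an endpoint I would reduce to condition (1); otherwise both nodes are interior, $a_k = a_{k+1} = 2$, and $S = \{k,k+1\}$. For condition (4), if the block touches an endpoint it reduces to condition (2) (the boundary pattern is then $(1,0)$ or $(0,1)$); otherwise all three nodes are interior, $(a_k,a_{k+1},a_{k+2}) = (3,2,3)$, and $S = \{k,k+1,k+2\}$ is the $3\times 3$ block above. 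In every case $S$ is a proper contiguous subset, so \cref{tridia} gives $\lambda_1(M) < \lambda_1(M_S) = 1$.

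The main obstacle I anticipate is bookkeeping rather than depth: I must make the endpoint-versus-interior distinction precise, since $a_i$ drops from $2+f(v_i)$ to $1+f(v_i)$ at the two ends, and I must ensure the chosen $S$ is always a \emph{proper} subset so that \cref{tridia} applies, which is precisely what forces the endpoint reductions in (3) and (4). The only genuine computation is the $3\times 3$ determinant for the $(1,0,1)$ pattern, where it is not a priori obvious that the smallest eigenvalue equals exactly $1$; everything else is the strict monotonicity already packaged in \cref{tridia}.
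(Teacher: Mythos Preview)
Your proposal is correct and follows essentially the same strategy as the paper: identify $\lambda_2(L_{\Cone P^f})=\lambda_1(M)$ for the tridiagonal $M=L_{\Cone P^f}[1]$, then invoke \cref{tridia} on a well-chosen contiguous principal block. The only tactical difference is that the paper takes blocks one node larger than yours and certifies $\lambda_1(B)<1$ via $\det(B)>0$ and $\det(B-I)=-1$, whereas you take minimal blocks with $\lambda_1(M_S)=1$ exactly and rely on the strict inequality in \cref{tridia}; your explicit endpoint-versus-interior bookkeeping is in fact a bit more careful than the paper's.
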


	\begin{proof}
		For notational conciseness, let $f_i:=f(v_i)$.
		Let a matrix $B$ be defined as follows:
		\begin{enumerate}
		\item In case $f_1=0$,
			\[
				B:=
				\begin{bmatrix}
					f_{1}+1&-1\\
					-1&f_{2}+2\\
				\end{bmatrix}.
			\]
		\item In case $(f_1,f_2)=(1,0)$,
			\[
				B:=
				\begin{bmatrix}
					f_{1}+1&-1& \\
					-1&f_{2}+2&-1\\
					&-1&f_{3}+2\\
				\end{bmatrix}.
			\]
		\item In case $(f_k,f_{k+1})=(0,0)$ where $1<k<n-1$,
			\[
				B:=
				\begin{bmatrix}
					f_{k}+2&-1&\\
					-1&f_{k+1}+2&-1\\
					&-1&f_{k+2}+2\\
				\end{bmatrix}.
			\]
		\item In case $(f_k,f_{k+1},f_{k+2})=(1,0,1)$ where $1<k<n-2$,
			\[
				B:=
				\begin{bmatrix}
					f_{k}+2&-1&& \\
					-1&f_{k+1}+2&-1&\\
					&-1&f_{k+2}+2&-1\\
					&&-1&f_{k+3}+2\\
				\end{bmatrix}.
			\]
		\end{enumerate}
		Since $P$, as a path graph, is undirected, notice that the above list covers all the four possibilities in the statement of the lemma, up to possibly reversing the order of the nodes.
		In any of the above cases, we can easily see that $\det(B)>0$ and $\det(B-I)=-1$, implying $\lambda_1(B)<1$.
		Since $B$ can be interpreted as $B=A_S$ for $A := L_{\Cone P_n^{d^\downarrow}}[1]$ and some index set $S$ as in \cref{tridia}, we find $\lambda_2(L_{\Cone P_n^f})=\lambda_1(A) \leqslant \lambda_1(B)<1$.
	\end{proof}

	The following proposition gives the interval in which $\lambda_2(L_G)$ lies for any layered path graph $G$.
	\begin{prop}\label{acinterval}
		For any layered path graph \(G\) it holds that
		\[
		0 \leqslant \lambda_2(L_{G}) \leqslant 1.
		\]
	\end{prop}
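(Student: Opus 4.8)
The plan is to prove the two inequalities separately, in both cases reducing $\lambda_2(L_G)$ to quantities attached to individual layers via \cref{layer}. Since $G$ is a layered path graph, $V(G_1)$ is a singleton, so the remark after \cref{layer} gives $\lambda_2(L_G) = \min\{\Re(\lambda_2(L_{\Cone G_i^{d^\downarrow}})) \mid 2 \leqslant i \leqslant m\}$. For each $i$ the matrix $L_{\Cone G_i^{d^\downarrow}}[1] = L_{G_i} + \operatorname{diag}(d^\downarrow)$ is symmetric, because the within-layer edges are undirected; hence the eigenvalues in play are real and I may drop the $\Re$. (If $m = 1$ then $G$ is a single node and the statement is vacuous, so I assume $m \geqslant 2$, and likewise that the layers are nonempty.)

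For the lower bound I would invoke \cref{alpha} with $\alpha = 0$: every layer $G_i$ is an undirected disjoint union of path graphs, $G_1$ is a singleton, and the hypothesis $d^\downarrow(v) \geqslant 0$ holds automatically, so $\lambda_2(L_G) \geqslant 0$. Equivalently, each $L_{G_i} + \operatorname{diag}(d^\downarrow)$ is a sum of two positive-semidefinite matrices and is therefore itself positive-semidefinite.

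For the upper bound it suffices, by the min-representation above, to exhibit one layer whose cone satisfies $\lambda_2 \leqslant 1$, and the right choice is $G_2$. The decisive structural fact is that $N^\downarrow(v) \subseteq V(G_1)$ is contained in a singleton for every $v \in V(G_2)$, so $d^\downarrow(v) \in \{0,1\}$ throughout $G_2$. Applying \cref{dis} to $\Cone G_2^{d^\downarrow}$ reduces the problem to a single path component $P$ of $G_2$ carrying a binary map $f := d^\downarrow|_{V(P)} : V(P) \to \{0,1\}$, and I would bound $\lambda_2(L_{\Cone P^f})$ by two cases. If $f \equiv 1$, then $L_{\Cone P^f}[1] = L_P + I$, whose smallest eigenvalue equals $\lambda_1(L_P) + 1 = 1$ since the all-ones vector spans the kernel of the path Laplacian $L_P$; as in the proof of \cref{eipath} this smallest eigenvalue is exactly $\lambda_2(L_{\Cone P^f})$, so $\lambda_2(L_{\Cone P^f}) = 1$. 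Otherwise $f$ attains the value $0$, and a short case analysis on the first index $k$ with $f(v_k) = 0$ shows that one of the patterns of \cref{eipath} occurs: if $k = 1$ or $f(v_n) = 0$ we are in case~(1); otherwise $f(v_{k-1}) = 1$ by minimality, and then $(f(v_k), f(v_{k+1})) = (0,0)$ is case~(3) while $(f(v_{k-1}), f(v_k), f(v_{k+1})) = (1,0,1)$ is case~(4). Either way \cref{eipath} gives $\lambda_2(L_{\Cone P^f}) < 1$. Combining the two cases, $\lambda_2(L_{\Cone P^f}) \leqslant 1$ for every component, so $\lambda_2(L_{\Cone G_2^{d^\downarrow}}) \leqslant 1$ by \cref{dis}, and finally $\lambda_2(L_G) \leqslant 1$ by \cref{layer}.

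The main obstacle is the upper bound, and more precisely the configuration $f \equiv 1$, which the four patterns of \cref{eipath} deliberately omit — it is exactly the case that makes the bound tight with $\lambda_2 = 1$, so it cannot be absorbed into \cref{eipath} and must be handled by the separate computation $\lambda_1(L_P + I) = 1$. The crux is therefore to recognize $G_2$ as the layer where $d^\downarrow \leqslant 1$ is forced, to split off the all-ones case, and to verify that any binary $f$ with a zero realizes one of the admissible patterns. The remaining bookkeeping in the case analysis — keeping the indices $k-1$ and $k+1$ in range and letting endpoint zeros fall under case~(1) — is routine.
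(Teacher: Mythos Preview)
Your proof is correct and shares the paper's high-level reduction: invoke \cref{layer} (with the singleton remark) and \cref{dis} to pass to a single path component $P$ of the second layer $G_2$, where the singleton $G_1$ forces $d^{\downarrow}\in\{0,1\}$. Where you diverge is in the final step bounding $\lambda_2(L_{\Cone P^{d^{\downarrow}}})$. The paper observes that $L_{\Cone P^{d^{\downarrow}}}[1]$ is tridiagonal of the form \eqref{eq:trid} and appeals directly to \cref{tridia} to conclude $\lambda_1(L_{\Cone P^{d^{\downarrow}}}[1])\leqslant 1$ in one stroke. You instead split off the tight configuration $d^{\downarrow}\equiv 1$ (handled by the explicit computation $\lambda_1(L_P+I)=1$) and, when some $d^{\downarrow}(v_k)=0$, run a short index-chasing argument to match one of the four patterns of \cref{eipath} and obtain the strict inequality $\lambda_2<1$. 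The paper's route is shorter and needs no case analysis; yours is longer but has the virtue of reusing precisely the machinery (\cref{eipath} and the pattern-matching on zeros of $d^{\downarrow}$) that drives the proof of \cref{thm:main}, so it doubles as a rehearsal for the main argument and makes explicit why the bound is attained exactly when no zero occurs.
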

	\begin{proof}
		Let $(G_i)_{1\leqslant i \leqslant m}$ be the layers of \(G\).
		By the definition of a layered path graph, $G_2$ is the disjoint union of some path graphs.
		Let \(P\) be one of them.
		Since $V(G_1)$ is a singleton, for any $v \in P$, we have $d^{\downarrow}(v)\in\set{0,1}$.
		Noting that $L_{\Cone P^{d^{\downarrow}}}[1]$ is in the form of \cref{eq:trid}, we find $\lambda_2(L_{\Cone P^{d^{\downarrow}}})=\lambda_1(L_{\Cone P^{d^{\downarrow}}}[1]) \leqslant d^{\downarrow}(v) \leqslant 1$ by \cref{tridia}.
		From \cref{layer} and \cref{dis}, we have $0 \leqslant \lambda_2(L_{G}) \leqslant \lambda_2(L_{\Cone G_2^{d^{\downarrow}}}) \leqslant \lambda_2(L_{\Cone P^{d^{\downarrow}}}) \leqslant 1$.
	\end{proof}

	We are now ready to show the main result.

	\begin{thm}\label{thm:main}
		Let $G$ be a layered path graph with layers $(G_i)_{1\leqslant i \leqslant m}$ satisfying the following condition, \(i=1,\dots,n\):
		\begin{equation}\label{eq:degcondition}
			\forall (v,v') \in E(G_i),\: \#\bigl(N^{\downarrow}(v)\setminus N^{\downarrow}(v')\bigr) \leqslant 1.
		\end{equation}
		Then, the following statements are equivalent:
		\begin{enumerateq}
		\item
			\(0 \leqslant \lambda_2(L_G) < 1\);
		\item
			there exists \(v \in V(G) \setminus V(G_1)\) such that \(d^{\downarrow}(v) = 0\).
		\end{enumerateq}
	\end{thm}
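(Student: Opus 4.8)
The plan is to first collapse the global eigenvalue problem onto individual path components, and then treat the two implications separately. Since $V(G_1)$ is a singleton, \cref{layer} together with its remark gives $\lambda_2(L_G) = \min\{\Re(\lambda_2(L_{\Cone G_i^{d^{\downarrow}}})) \mid 2\leqslant i \leqslant m\}$, and because each $G_i$ is a disjoint union of path graphs, \cref{dis} refines this to
\[
	\lambda_2(L_G) = \min\bigl\{\lambda_2(L_{\Cone P^{d^{\downarrow}}}) \mid P \text{ a path component of some } G_i,\ 2\leqslant i\leqslant m\bigr\}.
\]
All matrices here are symmetric positive semidefinite, so the real parts may be dropped. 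By \cref{acinterval} every term in this minimum is at most $1$; hence statement (i) is equivalent to the existence of a single path component $P$ with $\lambda_2(L_{\Cone P^{d^{\downarrow}}}) < 1$, whereas its negation forces every such term to equal $1$.

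For the implication (ii)$\implies$(i), suppose $v \in V(G)\setminus V(G_1)$ satisfies $d^{\downarrow}(v)=0$. Then $v$ lies in some path component $P = v_1\leftrightarrow\cdots\leftrightarrow v_n$, and $f:=d^{\downarrow}|_P$ vanishes at some index $k$. The key observation is that \eqref{eq:degcondition} pins the from-above degree of any neighbor of a zero-node to $\{0,1\}$: if $(v_{k-1},v_k)\in E(P)$ and $f(v_k)=0$, then $N^{\downarrow}(v_k)=\emptyset$, so $d^{\downarrow}(v_{k-1}) = \#\bigl(N^{\downarrow}(v_{k-1})\setminus N^{\downarrow}(v_k)\bigr)\leqslant 1$. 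I would then split on the position of $v_k$: if $k\in\{1,n\}$ condition~(1) of \cref{eipath} holds; if $v_k$ is interior and some neighbor also has $f=0$, condition~(3) holds; and if $v_k$ is interior with both neighbors equal to $1$, the block $(1,0,1)$ appears, yielding condition~(4), except when $v_k$ abuts an endpoint, where the same block exhibits the pattern $(1,0)$ or $(0,1)$ of condition~(2). In each case \cref{eipath} gives $\lambda_2(L_{\Cone P^{f}}) < 1$, whence $\lambda_2(L_G)<1$, and $\lambda_2(L_G)\geqslant 0$ by \cref{acinterval}.

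For the converse I would argue by contraposition. If (ii) fails, then $d^{\downarrow}(v)\geqslant 1$ for every $v\in V(G)\setminus V(G_1)$. Applying \cref{alpha} with $\alpha = 1$, whose hypothesis holds since $G_1$ is a singleton, gives $\lambda_2(L_G)\geqslant 1$, while \cref{acinterval} gives $\lambda_2(L_G)\leqslant 1$; hence $\lambda_2(L_G)=1$ and (i) fails. This direction needs neither the degree condition \eqref{eq:degcondition} nor \cref{eipath}.

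The main obstacle I anticipate is the case analysis in the forward direction, and in particular correctly matching the boundary configurations of a zero-node to the precise list of patterns in \cref{eipath}: a $(1,0,1)$ block adjacent to an endpoint must be read as the $(1,0)$ or $(0,1)$ pattern of condition~(2) rather than condition~(4), and one must verify that \eqref{eq:degcondition} genuinely confines every neighbor of a zero-node to $\{0,1\}$ so that no pattern outside the four listed can occur (which is exactly the role of the degree condition, as a value such as $(5,0,5)$ would make $\lambda_2(L_{\Cone P^{f}})>1$ and break the equivalence). Everything else is a direct assembly of the already-established \cref{layer,dis,acinterval,alpha}.
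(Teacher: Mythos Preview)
Your proposal is correct and follows essentially the same route as the paper: the contrapositive direction is handled by \cref{alpha} with $\alpha=1$, and the forward direction localizes the zero-degree node in a path component, uses \eqref{eq:degcondition} to force the neighboring from-above degrees into $\{0,1\}$, and then matches the resulting pattern to one of the four cases of \cref{eipath}, assembling via \cref{layer,dis}. One small inaccuracy: the cone Laplacians $L_{\Cone P^{d^{\downarrow}}}$ are not symmetric (the edges from $v_\ast$ are one-directional), but their nonzero eigenvalues coincide with those of the symmetric principal minor $L_{\Cone P^{d^{\downarrow}}}[1]$, so your conclusion that the real parts may be dropped still stands.
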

	\begin{proof}
		\begin{proofitemize}
		\item{\it``\(\neg\)(b) \(\Rightarrow\) \(\neg\)(a)''}
			If all nodes \(v \in V(G) \setminus V(G_1)\) satisfy \(d^{\downarrow}(v) \geq 1\), we may invoke \cref{alpha} with \(\alpha=1\) to infer that \(\lambda_2(L_G)\geq1\).
		\item{\it``(b) \(\Rightarrow\) (a)''}
			Suppose that \(d^{\downarrow}(v) = 0\) for some \(v\in V(G_i)\) with \(i\geq2\).
			Since $G_i$ is the disjoint union of some path graphs, one of them, be it \(P\), satisfies $v=v_k \in P$.
			If $k=1$ or $k=n$, we can apply case $(i)$ in \cref{eipath} to infer that $\lambda_2(L_{\Cone P^{d^{\downarrow}}})<1$.
			Otherwise, since $\#\bigl(N^{\downarrow}(v_{k-1}) \setminus N^{\downarrow}(v_{k})\bigr) \leqslant 1$ and $\#\bigl(N^{\downarrow}(v_{k+1}) \setminus N^{\downarrow}(v_{k})\bigr) \leqslant 1$, we have
			\(
				(d^{\downarrow}(v_{k-1}), d^{\downarrow}(v_k), d^{\downarrow}(v_{k+1}))\in\{(0,0,0),(0,0,1),(1,0,0),(1,0,1)\}.
			\)
			Hence, we can apply either case $(ii)$, $(iii)$, or $(iv)$ in \cref{eipath} to conclude that $\lambda_2(L_{\Cone P_n^{d^{\downarrow}}})<1$.
			From \cref{layer,dis}, we obtain $\lambda_2(L_{G}) \leqslant \lambda_2(L_{\Cone G_2^{d^{\downarrow}}}) \leqslant \lambda_2(L_{\Cone P_n^{d^{\downarrow}}}) < 1$.
		\qedhere
		\end{proofitemize}
	\end{proof}

	\begin{rem}
	Condition \eqref{eq:degcondition} is naturally satisfied in many common mobile robot formations such as the ones in \cref{fig:layeredpathgraphs}.
	We also note that any induced subgraph $H$ of a layered path graph $G$ with condition \eqref{eq:degcondition} is also a layered path graph possessing this feature, as long as $V(H_1)=V(G_1)$.
	That is, if the original layered path graph satisfies \eqref{eq:degcondition}, then the graphs after removing nodes except the one in $G_1$ also satisfy \eqref{eq:degcondition}.
	Then, \cref{thm:main} together with \cref{acinterval} implies that in any layered path graphs with condition \eqref{eq:degcondition}, if nodes are removed so that all the remaining nodes have at least one edge from the upper layer, the algebraic connectivity stays one. Otherwise, it deteriorates.
	\end{rem}

	\section{Examples}\label{sec:example}%
		In this section we verify our results through numerical examples.
\subsection{Algebraic connectivity}
  We first confirm our main result, \cref{thm:main}, with typical layered path graphs.
	Consider the three layered path graphs $G_{\rm tri}, G_{\rm sq1}, G_{\rm sq2}$ in \cref{fig:Gex}.
	They all satisfy condition \eqref{eq:degcondition} and hence their induced subgraphs are layered path graphs with condition \eqref{eq:degcondition} as long as $v_1$ is not removed.
	We then consider some induced subgraphs of $G_{\rm tri}, G_{\rm sq1}$ and $G_{\rm sq2}$ obtained by removing some nodes from the original graphs.
	The table under each graph in \cref{fig:Gex} shows the relation between the removed nodes and the algebraic connectivity of the induced subgraphs.
	We confirm that, if no node $v$ except $v_1$ satisfies $d^{\downarrow}(v)=0$, the algebraic connectivity equals $1$, i.e., no deterioration.
	However, if there is at least one node satisfying $d^{\downarrow}(v)=0$, it becomes less than $1$.

\begin{figure*}[p]
	\begin{subfigure}[b]{.325\linewidth}
		\begin{minipage}{\linewidth}
			\centering
			\includetikz[width=.75\linewidth]{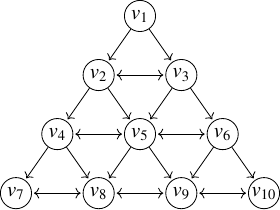}

			\vspace*{5pt}\footnotesize
				\begin{tabular}{|c|c|c|c|}  \hline
	Removed & $\lambda_2$ & $d^{\downarrow}(v)=0$\\ \hline \hline
	\rowcolor{gray!20}%
	-{}-{}- & 1 & -{}-{}- \\ \hline 
	$v_2$ & 0.4679 & $v_4$ \\ \hline 
	$v_4$ & 0.5272 & $v_7$ \\ \hline 
	\rowcolor{gray!20}%
	${v_7}$ & 1 & -{}-{}- \\ \hline 
	$v_2, v_6$ & 0.3820 & $v_4, v_{10}$ \\ \hline 
	$v_4, v_5$ & 0.2360 & $v_7, v_8$ \\ \hline 
	\rowcolor{gray!20}%
	${v_4, v_7}$ & 1 & -{}-{}-  \\ \hline 
	$v_4, v_{10}$ & 0.5188 & $v_7$ \\ \hline
	\rowcolor{gray!20}%
	${v_2, v_4, v_7}$ & 1 & -{}-{}- \\ \hline 
	$v_4, v_5, v_7$ & 0.4679 & $v_8$ \\ \hline 
	$v_4, v_5, v_{10}$ & 0.1981 & $v_7, v_8$ \\ \hline
	$v_2, v_4, v_5, v_7$ & 0.4679 & $v_8$ \\ \hline
	\rowcolor{gray!20}%
	${v_2, v_4, v_7, v_8}$ & 1 & -{}-{}- \\ \hline 
	$v_2, v_4, v_5, v_7, v_{10}$ & 0.3820 & $v_8$ \\ \hline 
\end{tabular}
		\end{minipage}
		\caption{$G_{\rm tri}$}
		\label{fig:Gtri}
	\end{subfigure}
	\hfill
	\begin{subfigure}[b]{.325\linewidth}
		\begin{minipage}{\linewidth}
			\centering
			\includetikz[width=.6\linewidth]{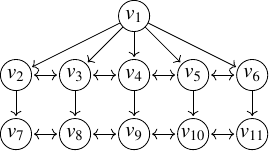}

			\vspace*{5pt}\footnotesize
			\begin{tabular}{|c|c|c|c|}  \hline
	Removed & $\lambda_2$ & $d^{\downarrow}(v)=0$\\ \hline \hline
	\rowcolor{gray!20}%
	-{}-{}- & 1 & -{}-{}- \\ \hline
	$v_2$ & 0.4978 & $v_7$ \\ \hline 
	$v_3$ & 0.6646 & $v_8$ \\ \hline 
	$v_4$ & 0.6972 & $v_9$ \\ \hline 
	$v_2, v_3$ & 0.2434 & $v_7, v_8$ \\ \hline 
	$v_2, v_4$ & 0.4038 & $v_7, v_9$ \\ \hline 
	$v_2, v_5$ & 0.4570 & $v_7, v_{10}$  \\ \hline 
	$v_2, v_6$ & 0.4384 & $v_7, v_{11}$ \\ \hline 
	$v_3, v_4$ & 0.4236 & $v_8, v_9$ \\ \hline 
	$v_3, v_5$ & 0.5188 & $v_8, v_{10}$ \\ \hline 
	$v_2, v_3, v_4$ & 0.1392 & $v_7, v_8, v_9$ \\ \hline 
	$v_2, v_3, v_5$ & 0.2160 & $v_7,  v_8, v_{10}$ \\ \hline
	$v_2, v_3, v_6$ & 0.2278 & $v_7, v_8, v_{11}$ \\ \hline
	$v_2, v_4, v_6$ & 0.3249 & $v_7, v_9, v_{11}$ \\ \hline  
	$v_3, v_4, v_5$ & 0.2679 & $v_8, v_9, v_{10}$ \\ \hline 
	$v_2, v_3, v_4, v_5$ & 0.0810 & $v_7, v_8, v_9, v_{10}$ \\ \hline 
	$v_2, v_3, v_4, v_6$ & 0.1134 & $v_7, v_8, v_9, v_{11}$ \\ \hline 
	$v_2, v_3, v_5, v_6$ & 0.1392 & $v_7, v_8, v_{10}, v_{11}$ \\ \hline 
\end{tabular}
		\end{minipage}
		\caption{$G_{\rm sq1}$}
		\label{fig:Gsq1}
	\end{subfigure}
	\hfill
	\begin{subfigure}[b]{.325\linewidth}
		\begin{minipage}{\linewidth}
			\centering
			\includetikz[width=.6\linewidth]{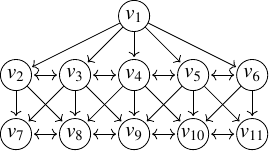}

			\vspace*{5pt}\footnotesize
			\begin{tabular}{|c|c|c|c|}  \hline
	Removed				& $\lambda_2$ & $d^{\downarrow}(v)=0$\\ \hline \hline
	\rowcolor{gray!20}%
	-{}-{}- 			&					& 							\\ \cline{1-1}
	\rowcolor{gray!20}%
	$v_2$				&					& 							\\ \cline{1-1}
	\rowcolor{gray!20}%
	$v_3$				&					&							\\ \cline{1-1}
	\rowcolor{gray!20}%
	$v_4$				&\multirow{-4}{*}{1}&\multirow{-4}{*}{-{}-{}-}	\\ \hline
	$v_2, v_3$			& 0.5357 			& $v_7$						\\ \hline
	\rowcolor{gray!20}%
	$v_2, v_4$			&					&							\\ \cline{1-1}
	\rowcolor{gray!20}%
	$v_2, v_5$			&					& 							\\ \cline{1-1}
	\rowcolor{gray!20}%
	$v_2, v_6$			&					& 							\\ \cline{1-1}
	\rowcolor{gray!20}%
	$v_3, v_4$			&					& 							\\ \cline{1-1}
	\rowcolor{gray!20}%
	$v_3, v_5$			&\multirow{-5}{*}{1}&\multirow{-5}{*}{-{}-{}-}	\\ \hline
	$v_2, v_3, v_4$		& 0.2531& $v_7, v_8$	\\ \hline
	$v_2, v_3, v_5$		& 0.5065& $v_7$			\\ \hline
	$v_2, v_3, v_6$		& 0.5337& $v_7$			\\ \hline
	\rowcolor{gray!20}%
	$v_2, v_4, v_6$ 	& 1		& -{}-{}-		\\ \hline
	$v_3, v_4, v_5$		& 0.6972& $v_9$			\\ \hline
	$v_2, v_3, v_4, v_5$& 0.1392& $v_7, v_8,v_9$\\ \hline 
	$v_2, v_3, v_4, v_6$& 0.2434& $v_7, v_8$	\\ \hline 
	$v_2, v_3, v_5, v_6$& 0.4384& $v_7, v_{11}$	\\ \hline 
\end{tabular}
		\end{minipage}
		\caption{$G_{\rm sq2}$}
		\label{fig:Gsq2}
	\end{subfigure}
	\caption{%
		Relation between node deletion and the algebraic connectivity of some induced subgraphs of layered path graphs.
		As shown in \cref{thm:main}, preservation of connectivity is equivalent to the nonvanishing of the from-above degree \(d^{\downarrow}(v)\) for every \(v\in V(G)\setminus V(G_1)\).%
	}
	\label{fig:Gex}
\end{figure*}

\subsection{Formation control of mobile robots}
	We now consider a mobile robot formation control problem where the desired formation is the shape of the graph $G_{\rm tri}$ of \cref{fig:Gtri}.
	The goal here is to achieve a velocity consensus matching that of the leader $v_1$ and target formation.
	The desired formation for the induced subgraph of $G_{\rm tri}$ as a result of node deletion becomes the one with corresponding nodes and edges removed.
%

	The following standard second-order consensus protocol is considered:
	\[
		\begin{cases}
		\dot{p}_i(t) &{}= v_i\\[4pt]
		\dot{v}_i(t) &{}= -\sum_{j=1}^N [L]_{ij}[(p_i - p_j) + (v_i - v_j)] \quad i = 1,\dots,N
		\end{cases}
	\]
	where $N$ is the number of robots, $p_i \in \R^{2}$ and $v_i\in\R^{2}$ are the position and the velocity of the $i$-th robot in a 2-dimensional space.

	We set $v_1 = (0,1)$ m/s.
	The rest of the robots start from perturbed initial positions and velocities from the desired ones.
	We consider three configurations; $G_{\rm tri}$, $G_{\rm tri}[\{v_2,v_4,v_7\}^C]$, and $G_{\rm tri}[\{v_4,v_5,v_{10}\}^C]$, where $G[\tilde{V}^C]$ denotes the induced subgraph of $G$ obtained by removing nodes in $\tilde{V} \subsetneq V(G)$.
	As we can see from the table in \cref{fig:Gtri}, although three nodes are removed from $G_{\rm tri}$ for both $G_{\rm tri}[\{v_2,v_4,v_7\}^C]$ and $G_{\rm tri}[\{v_4,v_5,v_{10}\}^C]$, the algebraic connectivity of the former remains $1$ while that of the latter is decreased to $0.1981$.
	This fact can be validated from the convergence speed to reach a desired consensus configuration represented in \cref{fig:simulation}.

	\begin{figure*}[p]
		\begin{subfigure}[t]{.325\linewidth}
			\vspace*{0pt}%
			\includetikz[width=\linewidth]{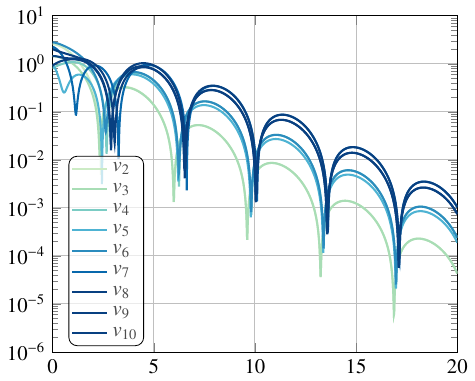}
		\end{subfigure}
		\hfill
		\begin{subfigure}[t]{.325\linewidth}
			\vspace*{0pt}%
			\includetikz[width=\linewidth]{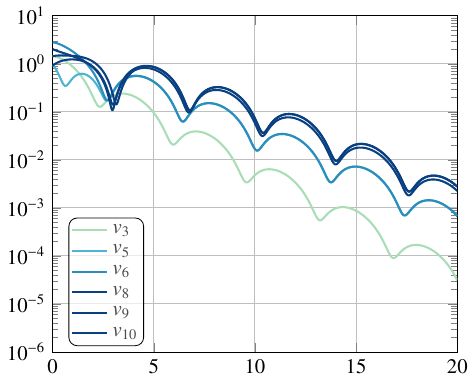}
		\end{subfigure}
		\hfill
		\begin{subfigure}[t]{.325\linewidth}
			\vspace*{0pt}%
			\includetikz[width=\linewidth]{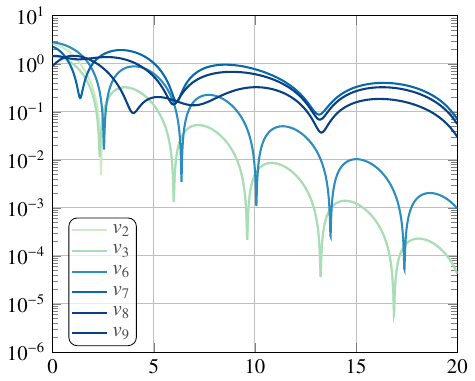}
		\end{subfigure}

		\vspace*{0.5\baselineskip}%
		\begin{subfigure}[t]{.325\linewidth}
			\includetikz[width=\linewidth]{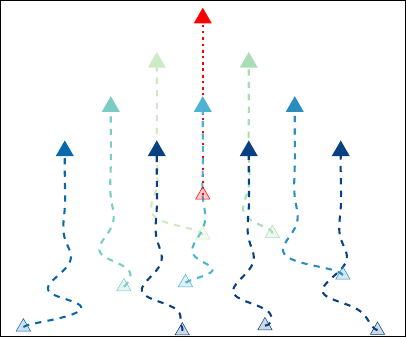}
			\caption{Original graph}
		\end{subfigure}
		\hfill
		\begin{subfigure}[t]{.325\linewidth}
			\includetikz[width=\linewidth]{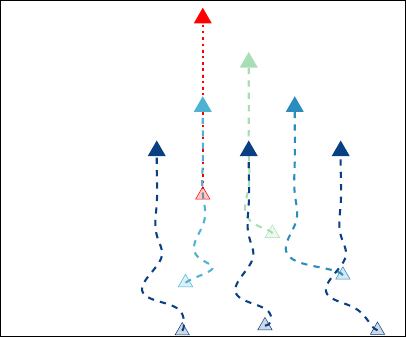}
			\caption{Three nodes ($v_2$, $v_4$ and $v_7$) are removed, yet the connectivity is unaffected}
		\end{subfigure}
		\hfill
		\begin{subfigure}[t]{.325\linewidth}
			\includetikz[width=\linewidth]{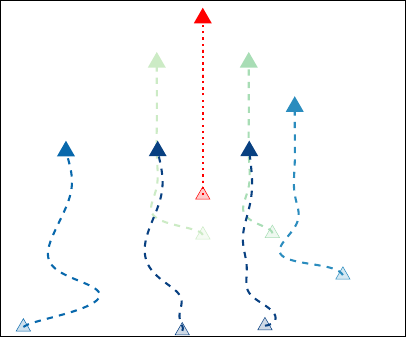}
			\caption{Three nodes ($v_4$, $v_5$ and $v_{10}$) are removed, damaging the connectivity}
		\end{subfigure}
		\caption{%
			Formation control of mobile robots (top row: displacement from desired position over time; bottom row: trajectories).
			The target formations are represented by the graph $G_{\rm tri}$ in {\bf (a)}, and the subgraphs obtained by removing three vertices in {\bf (b)} and {\bf (c)}.
			Despite the loss of three vertices, configuration {\bf (b)} maintains a high connectivity because each follower is still connected to at least one agent from the above layer.
			This is confirmed in the table of \cref{fig:Gtri}, having \(\lambda_2=1\) in this configuration.
			In {\bf (c)}, instead, a slower convergence to a consenus configuration is experienced because of the value \(\lambda_2\approx0.1981\), which occurs because of the loss of connection with the above layer by agents \(v_7\) and \(v_8\).
		}%
		\label{fig:simulation}%
	\end{figure*}

		\section{Conclusions}\label{sec:conclusions}%
		\enlargethispage{-8cm}%
		This paper introduced layered path graphs that are common network structures in mobile robot formation control.
The relation between algebraic connectivity and node deletion in such graphs has been studied.
To this end, the concepts of layered path graphs, from-above degree, and subgraph cone are introduced.
In particular, it has been shown that, in order to keep the algebraic connectivity unaffected by node deletion, it is essential to remove nodes so that all remaining nodes receive information from at least one node in the upper layer.


	\bibliographystyle{plain}%
	\bibliography{TeX/Bibliography.bib}%
\end{document}